\begin{document}
\title{RNA-2QCFA: Evolving Two-way Quantum Finite Automata with Classical States for RNA Secondary Structures} 

\author{Amandeep Singh Bhatia$^{1,^*}$,  Shenggen Zheng$^{2}$\\
$^{1}$\textit{Chitkara University Institute of Engineering \& Technology, Chitkara University, Punjab, India}\\
$^{2}$\textit{Center for Quantum Computing, Peng Cheng Laboratory, Shenzhen, China} \\
E-mail: $^{1,^*}$amandeepbhatia.singh@gmail.com}

\begin{abstract}
Recently, the use of mathematical methods and computer science applications   have got significant response among biochemists and biologists to modeling the biological systems. The computational and mathematical methods have  an enormous potential for modeling the deoxyribonucleic acid (DNA) and ribonucleic acid (RNA) structures. The modeling of DNA and RNA secondary structures using  automata theory had a significant impact in the fields of computer science. It is a natural goal to model the RNA secondary biomolecular structures using quantum computational models.  Two-way quantum finite automata with classical states are more dominant than two-way probabilistic finite automata in language recognition. 
The main objective of this paper is on using two-way quantum finite automata with classical states to simulate, model and analyze the ribonucleic acid (RNA) sequences.\\

\textbf{Keywords}: ribonucleic acid, hairpin loop, formal languages, quantum finite automata, two-way finite automata with quantum and classical states, bio-molecular structures
\end{abstract}
\maketitle

\maketitle



\newcommand{\myqed}{\rule{2pt}{1em}}
\newenvironment{myproof}{\begin{proof}}{\let\qedsymbol\myqed\end{proof}}
\theoremstyle{plain}
\newtheorem{thm}{Theorem}
\theoremstyle{definition}
\newtheorem{defn}{Definition}
\newtheorem{exmp}{Example}[section]
\section{Introduction \& Motivation}
In recent years, the field of bioinformatics has gained much attention among research and academia communities to develop intelligent systems for simulating and analysis of molecular biology. Bioinformatics is a active, diverse and fast-growing research field.  It is the application of information technology to store, process and analyse the biological data, especially DNA, RNA, and protein sequences \cite{2}. Presently, the focus is on developing probabilistic models to examine the biological sequences at genome level. Till now, several methods based on automata theory, grammatical formalism, learning theory and statistical theory have been introduced to modeling and analyzing the behaviour of RNA, DNA and protein sequences \cite{3}. The accurate modelling and prediction of the genomewide are the major challenges in bioinformatics \cite{4}. 

Nowadays, quantum computing is the new buzz word and become a hot research topic in industry, academic and R\& D centres getting significant response and financial support from all directions. Quantum computing incorporates elements from  physics, mathematics and computer science \cite{5}. Quantum computers have the potential to tackle the problems that would take classical computers millions of years. It promises to solve complex real-world problems such as modeling financial risks, simulating chemistry and optimising supply chains. It would impact biologists to study the possible ways to interact and fold proteins with one another and chemists to model interactions between drugs. We can examine some properties quickly using quantum superposition principle and entanglement than with classical means \cite{6}. 

The finite automata theory is one of the keystone of theoretical computer
science \cite{7}. The combination of quantum mechanics and classical automata theory gives us quantum finite automata (QFA) \cite{8, 9}. The concept of QFA models was introduced by Moore and Crutchfield \cite{50} and Kondacs and Watrous \cite{12} separately, soon after the discovery of Shor’s factoring algorithm \cite{10}. QFA are abstract models of machines with finite memory for quantum computers, which play an important role in carrying out computation in real-time, i.e. the tape head takes exactly one step per input symbol and moves towards the right direction only \cite{11}. It is described as a quantum analogue of a classical finite automaton. It lays down the perception of quantum processors for executing quantum operations on reading the inputs.

Since then, there is a diversity of quantum automata models have been studied  and investigated in all directions such as quantum finite automata, Latvian QFA (LQFA) \cite{12}, 1.5-way QFA \cite{13}, two-way QFA (2QFA) \cite{12}, quantum pushdown automata (QPDA) \cite{55}, quantum Turing machine (QTM) \cite{15}, quantum multihead finite automata (QMFA) \cite{17, 40}, multi-letter QFA \cite{16}, one-way quantum finite automata with classical states (1QCFA) \cite{14}, two-way quantum finite automata with classical states (2QCFA) \cite{1}, quantum queue automata \cite{100} and many more since last two decades. These models are effective in examine the frontiers of computational properties and expressive power of automata. Quantum computers are more powerful than probabilistic Turing machines and even Turing machines. Therefore, quantum computational models can be consider as generalizations of its physical models \cite{8}. 

Bioinformatics introduces and utilizes biological computational algorithms for interpretation of biological processes based on  interaction between genomes \cite{18}. The biological sequences are modeled using grammatical formalism to efficiently solve the  bioinformatics computational problems such as prediction and classification of sequences, calculate multiple alignments, sequences analysis and data mining \cite{19}. DNA can be seen as recipe of an organism. It is double stranded and made from four different monomers  called nucleotides \textit{(A, C, G ,T)} representing  adenine (\textit{A}), cytosine (\textit{C}), guanine (\textit{G}) and thymine (\textit{T}). RNA is like DNA except the base  thymine (\textit{T}) is replaced by base uracil (\textit{U}). It is often single-stranded structure and folds around itself \cite{34}. Some of the bases form mismatched nucleotides, which results in formation of loops of unpaired single strands at the center or end of a duplex. 

Kondacs and Watrous \cite{12} proposed the notion of two-way quantum finite automata i.e. quantum version of two-way deterministic finite automata. It has been proved that it is more dominant than classical counterparts for language recognition. 2QCFA can recognize some context-free, context sensitive languages and all regular languages. But, atleast $O(log~n)$ qubits are needed to store positions of the input tape head, where \textit{n} is the length of an input string. In order to get over the disadvantage of 2QCFA, Ambainis and Watrous \cite{1} introduced two-way quantum finite automata with classical states (2QCFA). Its computational power lies in between 1QFA and 2QFA, but still more powerful than its classical variants. It has been proved that 2QCFA is more powerful than two-way probablistic finite automata (2PFA). It can recognize the palindrome language $L=\{ww^r \mid w \in \{a, b\}^*\}$, where $w^r$ is the reverse of \textit{w} in exponential time with one-sided error, but 2PFA cannot be designed for \textit{L} with bounded error.  Zheng et al. \cite{78} studied the state succinctness of 2QCFA. Zheng et al. \cite{20} investigated that 2QCFA can be designed for $L=\{xcy \mid x,y \in \{a, b\}^*, |x|=|y|, \Sigma=\{a, b, c\}\}$ with bounded error in polynomial time, but 2PFA takes an exponential time with bounded error. Qiu et al. \cite{21} proved that the class of languages recognized by 2QCFA are closed under union, intersection, complement and the reversal operation. 

Motivated from the above-mentioned facts, we have transcribed RNA secondary structures in the form of formal languages and modeled them using two-way quantum finite automata with classical states. The main objective is to examine how RNA secondary structures perform sequence identification equivalent to quantum automata models. The crucial advantage of this approach is that chemical reactions in the form of accept/reject signatures can be processed in linear time with one-sided bounded error.  The organization of rest of this paper is as follows: Subsection is devoted to prior work. In Sect. 2, some preliminaries are given. The notion of two-way quantum finite automata with classical states is given in Sect. 3. In Sect. 4, the RNA secondary structures (hairpin loop, pseudoknot and dumbbell structures) are transcribed in formal languages and modeled using two-way quantum finite automata with classical states. Finally, Sect. 5 is the conclusion.
\section{Prior Work}
During the last three decades, several representations of RNA and DNA sequences using automata theory and formal grammar have been found in literature. The structures of DNA and RNA are represented using the concept of classical automata theory. In 1984, Brendel and Busse \cite{22} transcribed nucleic acid sequences as words over the input alphabet of nucleotides and formulated that genomes can be described in formal language theory. Sung  \cite{88} modelled the RNA pseudoknots using context sensitive grammar. In 1992, Searls \cite{3} formulated RNA and DNA sequences such as pseudoknot, inverted and tandem repeat using indexed grammar. Later, Searls \cite{23} used string variable grammar to represent DNA sequences. 

Roy et al. \cite{24} proposed the concept of micron automata processor to find the conserved sequences in protein or multiple DNA sequences. Cai et al. \cite{25} represented the pseudoknot biomolecular structures of RNA using parallel communicating grammar. Barjis et al. used finite automata as modeling tool for formulation and simulation of production of proteins. Mizoguchi et al. \cite{26} modeled different classes of pseudoknot structure with  stochastic multiple context-free grammar. Kuppusamy and Mahendran \cite{27} presented the notion of matrix insertion-deletion system and used to analyze and model the RNA secondary structures such as stem and loop, pseudoknot, attenuator, internal loop, bulge loop and kissing hairpin. Recently, Bhatia and Zheng \cite{101} transcribed the chemical reactions in from of formal languages and modeled them using two-way quantum finite automata.  Fernau et al. \cite{28} introduced small size universal matrix insertion grammar to simulate the computation of DNA. Krasinski et al. \cite{29}  described the restricted enzyme in DNA with circular mode pushdown automata. 

Khrennikov and Yurova \cite{30} modeled the behavior of protein structures using classical automata theory and investigated the resemblance between the quantum systems and modeling behavior of proteins. Quantum omega automata can be used to model the behavior of chemical reactions in biological systems \cite{104}. Soreni et al. \cite{31} shown programmable three symbol three state finite automata and carried out biomolecular computations parallelly on surface. Lin and Shah \cite{32} represented the patterns in DNA sequences using statistical finite automata. Cavaliere et al. \cite{33} and  Rothemund used Turing machine and pushdown automata to analyze the action of a restricted enzyme in DNA. Bhatia and Kumar \cite{34} shown the modeling of double helix, hairpin and internal loops using linear time 2QFA.  Recently, Duenas-Diez and Perez-Mercader \cite{35} designed molecular machines for chemical reactions. It has been demonstrated that chemical reactions transcribed in formal languages, can be recognized by Turing machine without using biochemistry.
\section{Two-way Finite Automata with Quantum and Classical States}
Ambainis and Watrous \cite{1} presented the notion of two-way quantum finite automata with classical states (2QCFA). The computational power of 2QCFA lies in between 2QFA and 1QFA. In 2QCFA, the tape head position is classical and the internal state may be a (mixed) quantum state.
\begin{defn} \textup{\cite{1}}
	A 2QCFA is defined as a nonuple $(S, Q, \Sigma, \Theta,  \delta,q_0, s_0, S_{acc}, S_{rej})$, where
	\begin{itemize}
		\item $\it S$ is a finite set of classical states,
		\item $\it Q$ is a finite set of quantum states,
		\item $\Sigma$ is an input alphabet such that $\Gamma=\Sigma \cup \{\#, \$\}$, where \# and \$ are left and right-end markers, respectively,
		\item $\Theta$ defines the evolution of the quantum portion of the internal state, 
		\begin{equation}
		S \setminus (S_{acc} \cup S_{rej}) \times \Gamma \rightarrow U(\mathcal{H}(Q)) \cup P(\mathcal{H}(Q)) 
		\end{equation}
		where $P(\mathcal{H}(Q))$ and $U(\mathcal{H}(Q))$ representing the projective measurements and unitary operators over Hilbert space $\mathcal{H}(Q)$ with set \textit{Q}. Therefore, $\Theta(s, \vartheta)$ is equivalent to either projective measurement or a unitary evolution. 
		\item $\delta$ defines the evolution of classical states. If $\Theta(s, \vartheta) \in P(\mathcal{H}(Q))$, then $\delta$ is defined as
		\begin{equation}
		S \setminus (S_{acc} \cup S_{rej}) \times \Gamma \times E \rightarrow S \times \{\leftarrow, \uparrow, \rightarrow \},
		\end{equation}		
		where \textit{E} denotes possible set of eigenvalues $E=\{e_1, e_2,..., e_n\} $ and projector set $\{P(e_j): j$=1,2,...,\textit{n}\} and $P(e_j)$ is the projector onto the eigenspace. $\{\leftarrow, \uparrow, \rightarrow \}$ shows the head movement towards left, stationary and right side of the input tape, respectively. If $\Theta(s, \vartheta) \in U(\mathcal{H}(Q))$, then $\delta$ is defined as
		\begin{equation}
		S \setminus (S_{acc} \cup S_{rej}) \times \Gamma  \rightarrow S \times \{\leftarrow, \uparrow, \rightarrow \},
		\end{equation}		
		\item $q_0$ is an initial quantum state $q_0 \in Q$,
		\item $s_0$ is an initial classical state $s_0 \in S$,
		\item $S_{acc}, S_{rej}$ are the set of accepting and rejecting states respectively, $(S_{acc},S_{rej}\subseteq S)$.
	\end{itemize}
\end{defn}
The computation procedure of 2QCFA to process a given input string \textit{w}   is as follows:  Initially, the classical and quantum state are $s_0$ and $\ket{q_0}$ and head is positioned on left-end marker \$. The state of 2QCFA is changed according to $\Theta(s_0, \$)$. On reading the input symbol $\sigma \in \Sigma$, the classical state is changed according to $\delta(s, \sigma)$ and quantum state is changed according to $\Theta(s, \sigma)$. The tape head is moved according to direction $d=\{\leftarrow, \uparrow, \rightarrow\}$. 
\begin{itemize}
	\item
	If $\Theta(s_0, \sigma)= U(\mathcal{H}(Q))$, then the classical state $s_0$ is transformed to $s_1$ according to $\delta(s_0, \sigma)=(s_1, d)$, quantum state is transformed as $\ket{q_0}=U\ket{q_0}$ and head movement is determined by \textit{d}. 
	
	\item If $\Theta(s_0, \sigma)= P(\mathcal{H}(Q))$, then the projective measurement is carried out on $\ket{q_0}$. Suppose, $P=\{P_1, P_2,..., P_n\}$  with possible eigenvalues $\{e_j\}^{n}_{j=1}$. After performing the measurement, we get a result $e_j \in E$ with probability $p_j=\bra{q_0}P_j
	\ket{q_0}$ and the quantum state is transformed as $P_j\ket{q_0}/ \sqrt{p_j}$. The classical state is changed as  $\delta(s_0, \sigma)= (s_1, d)$. 
\end{itemize}

Finally, each measurement outcomes are probabilistic and the classical state transitions may be also probabilistic. Thus, the 2QCFA is said to be accepted with probability  $S_{acc}(w)$  when the computation is halted and automata enters the classical accepting state $S_{acc}$, otherwise it is said to be rejected with probability $S_{rej}(w)$. Consider a language $L \subset \Sigma^*$,  it is said to be accepted with one-sided error $\epsilon$  by 2QCFA $M_Q$ if the probability of acceptance $Pr[M_Q~ accepts~w]=1$, $\forall w \in L$, otherwise it is said to be rejected with one-side error if $Pr[M_Q~rejects~w]\geq 1-\epsilon$ if $w \notin L$. 
\section{RNA Secondary Structures Modeling}
In this section, we analyze, model and simulate the RNA secondary biomolecular structures such as hairpin loop, pseudoknot structure and dumbbell structure using two-way quantum finite automata with classical states (2QCFA). We assume that the reader is familiar with the classical automata theory and the concept of quantum computing; otherwise, reader can refer to the theory of automata \cite{7}, quantum information and computation \cite{62, 64}.

\subsection{Hairpin Loop}
Hairpin is the primary unit secondary structure in RNA molecules. It plays a crucial  role in various  biological processes such as DNA transposition, DNA recombination, gene expressions and RNA-protein recognition. A hairpin loop consists of a base-paired stem formed in single-stranded nucleic acids and ends to form unpaired nucleotide bases \cite{36}. It is named according to size and composition of loop.  Fig 1 shows the representation of hairpin loop structure.  Hairpin loop can be transcribed as language $L_{h}=\{x \in \{a, u, g, c\}^* \mid x=x^r\}$ to form base pairing, where $x^r$ is \textit{x} in the reverse order.
The detailed proof for palindrome language $L=\{w \in \{a, b\}^* \mid w=w^r\}$ can be find in  Ambainis and Watrous \cite{1} paper.
\begin{figure}[h]
	\centering
	\includegraphics[scale=0.65]{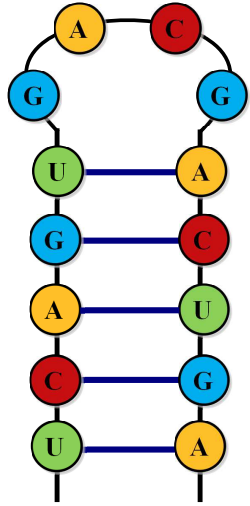}
	\caption{Representation of Hairpin loop structure}
\end{figure}
\begin{thm}
	A language $L_{h}=\{x \in \{a, u, g, c\}^* \mid x=x^r\}$, where $x^r$ is \textit{x} in  the reverse order, representing hairpin loop biomolecule structure can be recognized by 2QCFA with one-sided error in exponential time, which cannot be recognized by 2PFA. 
\end{thm}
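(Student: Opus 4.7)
The plan is to adapt the Ambainis--Watrous construction for the binary palindrome language $L=\{w\in\{a,b\}^*\mid w=w^r\}$ to the four-letter RNA alphabet, as the theorem statement itself defers to that proof. For the negative part I would observe that binary palindromes embed into $L_h$ by identifying $\{a,b\}$ with any two of the four nucleotides, so a bounded-error 2PFA for $L_h$ would yield a bounded-error 2PFA for $L$, contradicting the classical lower bound cited in \cite{1}.

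For the positive part I would fix four single-qubit unitaries $U_a, U_u, U_g, U_c \in SU(2)$ that generate a free subgroup of $SU(2)$; this is possible, for instance, by taking four rotations about distinct axes by algebraic irrational angles. Writing $V(x) = U_{x_n}\cdots U_{x_1}$ for $x = x_1\cdots x_n$, one has $V(x^r) = U_{x_1}\cdots U_{x_n}$, and freeness gives $V(x^r)^{-1}V(x) = I$ iff $x = x^r$. A single round of the 2QCFA then uses one qubit and $O(1)$ classical states: starting from $\ket{q_0}$, scan the input left-to-right applying $U_{x_i}$, then scan right-to-left applying $U_{x_i}^{-1}$, and finally measure in the computational basis. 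Palindromes always return $\ket{q_0}$ and are never rejected; non-palindromes produce the orthogonal outcome with some probability $p(x) > 0$, which I treat as a rejection signal. A classical outer loop repeats the round until the compound rejection probability exceeds $1-\epsilon$; since only $O(1)$ classical states are available, the loop counter is realized by the Ambainis--Watrous gadget in which the tape head performs biased random walks between $\#$ and $\$$ to emulate a coin with exponentially small bias, a subroutine that is alphabet-independent and transfers verbatim.

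The main technical obstacle, and the source of the exponential running time, is a quantitative lower bound of the form $\|(V(x^r)^{-1}V(x) - I)\ket{q_0}\| \geq c^{-n}$ for some fixed $c > 1$ and every non-palindrome $x$ of length $n$; without such a bound an exponential number of repetitions would not suffice. In \cite{1} this is obtained by choosing rotations with algebraic matrix entries and invoking a Baker-type lower bound on how close a nontrivial reduced word in those matrices can come to the identity. The four-letter case uses the same estimate inside a number field of the same bounded degree, so the only genuine step beyond bookkeeping is checking that the particular quadruple of matrices satisfies the hypotheses of that Diophantine bound; once this is done, the one-sided error guarantee and the exponential expected running time follow exactly as in the binary case.
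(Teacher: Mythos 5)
Your overall architecture is the same as the paper's (and as Ambainis--Watrous \cite{1}): accumulate a product of per-symbol unitaries on one pass, undo it in reverse-word order on a second pass, measure, treat any non-$\ket{q_0}$ outcome as a rejection signal, accept with an exponentially small probability via a coin-flip/random-walk gadget, and repeat forever to get one-sided error in exponential expected time. Two of your choices are actually improvements on the paper: you use four \emph{distinct} generators, whereas the paper sets $U_a=U_u$ and $U_g=U_c$, so its automaton computes the identity on, e.g., $x=au$ and accepts that non-palindrome with probability $1$; and you supply an argument for the 2PFA lower bound (restriction to a two-letter subalphabet reduces to binary palindromes), which the paper merely asserts. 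However, there is a concrete bug in your machine: if the first pass applies $U_{x_1},\ldots,U_{x_n}$ left-to-right and the second pass applies the inverses \emph{right-to-left}, the composite operator is $U_{x_1}^{-1}\cdots U_{x_n}^{-1}U_{x_n}\cdots U_{x_1}=I$ on every input, so nothing is ever rejected. To realize your own target operator $V(x^r)^{-1}V(x)$ you must rewind the head to the left endmarker without touching the quantum state and make a \emph{second left-to-right} pass applying the inverses, which is exactly what steps 3--4 of the paper's Table~1 do.

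The second, more substantive divergence is the quantitative lower bound on the rejection amplitude for non-palindromes. You attribute to \cite{1} a Baker-type Diophantine estimate for algebraic irrational rotations and claim the rest is bookkeeping; that is not how the bound is obtained there or here. The construction deliberately uses rotations whose matrices are $\tfrac{1}{5}$ times integer matrices (the $3$-$4$-$5$ rotations in the displayed $U_a,U_g$), so after $2n$ steps every amplitude is an integer divided by $5^{2n}$; a nonzero off-$\ket{q_0}$ amplitude is therefore at least $5^{-2n}$ in magnitude, giving $P_{rej}\geq 25^{-O(n)}$ by pure arithmetic, with a short induction modulo $5$ showing that distinct words of the same length move the start vector to distinct points (this replaces your appeal to freeness). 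This elementary denominator argument is what makes the constant in $k\geq\max\{\log 25,-\log\epsilon\}$ explicit. If you instead pick ``four rotations about distinct axes by algebraic irrational angles,'' freeness alone gives you $P_{rej}>0$ but no effective rate, and the effective separation bound you would then need is precisely the hard step you have left unproved. Either import the rational-entry generators (taking four distinct ones, e.g.\ matrices with entries in $\tfrac{1}{5}\mathbb{Z}[i]$ if you want to stay on one qubit) and run the denominator argument, or supply the explicit Diophantine estimate; as written the exponential-time claim is not justified.
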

\begin{proof}
	\begin{equation}
	U_a=U_u=\dfrac{1}{5} \begin{pmatrix}
	4 & 3 & 0 \\
	-3 & 4 & 0\\
	0 & 0 &  5
	\end{pmatrix}, U_g=U_c=\dfrac{1}{5} \begin{pmatrix}
	4 & 0 & 3 \\
	0 & 5 & 0\\
	-3 & 0 & 4
	\end{pmatrix}\end{equation}
	The idea of the proof is as follows. It consists of two phases. First we define a 2QCFA for $L_1$ using quantum register consisting three orthogonal states. In second phase, we modify the 2QCFA such as natural mapping is performed from three-dimensional Euclidean space to unit sphere in two-dimensional Hilbert space $\mathcal{H}$. We construct a 2QCFA $M_1$ for the language $L_{h}=\{x \in \{a, u, g, c\}^* \mid x=x^r\}$ with three quantum states $\{q_0, q_1, q_2\}$, where $q_0$ is an initial state. $M_1$ has two unitary matrices  $U_a=U_u$ and $U_g=U_c$ defined as follows. The automaton $M_1$ proceeds as follows:
	\begin{table}[!ht]
		\centering
		\caption{Details of the 2QCFA for $L_1$}
		\begin{tabular}{ p{0.5cm} p{13cm}}
			\hline
			\multicolumn{2}{l}{Repeat the following endlessly:} \\
			1. & Set the initial quantum state $q_0$ and shift the input tape head under the first input symbol.\\
			2. & While the presently symbol read is not \$, do the following: \\
			~ ~ (2.1). &  ~ ~ If the presently examined symbol is \textit{a} or \textit{u}, apply $U_a$ or $U_u$ on the quantum state, respectively.\\
			~ ~ (2.2). &  ~ ~ If the presently read symbol is \textit{g} or \textit{c}, execute $U_g$ or $U_c$ on the quantum state, respectively.\\
			~ ~ (2.3). &  ~ ~ Shift the position of tape head one square towards the right. \\
			3. & Repeat the following subroutines:\\
			~ ~ (3.1). &  ~ ~ Move the input tape head towards the left direction until the right-end marker symbol \# is reached.\\
			~ ~ (3.2). &  ~ ~ Shift the position of tape head one square towards the right.\\
			4. & While the presently symbol read is not left-end marker \$, do the following:\\
			~ ~ (4.1). &  ~ ~ If the presently read symbol is \textit{a} or \textit{u}, apply $U_{a}^{-1}$ or $U_{u}^{-1}$ on the quantum state, respectively.\\
			~ ~ (4.2). &  ~ ~ If the presently scanned symbol is \textit{g} or \textit{c}, execute $U_{g}^{-1}$ or $U_{c}^{-1}$ on the quantum state, respectively.\\
			~ ~ (4.3). &  ~ ~ Shift the position of the input tape head towards the right. \\
			\multicolumn{2}{l}{Perform the measurement on quantum state, if the outcome is not $q_0$, then it is rejected.}\\
			\multicolumn{2}{l}{Initialize the variable \textit{z}=0}\\
			5. & While the presently symbol scanned is not right-end marker \#, do the following:\\
			~ ~ (5.1). &  ~ ~ Replicate \textit{k} coin flips. Initialize \textit{z}=1, if all outcomes are not "heads".\\
			~ ~ (5.2). &  ~ ~  Shift the position of tape head one square towards the left.\\
			\multicolumn{2}{l}{if \textit{z}=0, it is said to be accepted.}\\
			\hline
		\end{tabular}
	\end{table}
	Consider an input string $w=w_1,w_2,...,w_n$, the tape squares are indexed by 0 and \textit{n}+1 consist both end-markers \# and \$, respectively. The computation process of $M_1$ starts with quantum state $\ket{q_0}$ is as follows. As while-loop 2 is implemented, the input tape head traverse each input symbol and performs $U_a$ or $U_u$ on the quantum state (depends upon whether the input symbol is \textit{a} or \textit{u} and performs $U_g$ or $U_c$ on the quantum state (depends upon whether the input symbol is \textit{g} or \textit{c}, respectively. Suppose $W_i$ denote the matrix ($U_a$ or $U_u$) and  ($U_g$ or $U_c$), as defined in (4), depending upon $w_i$ is (\textit{a} or \textit{u}) and (\textit{g} or \textit{c}). The automaton $M_1$ changes its state after executing loop 2 as
	\begin{equation}
	\beta_0 \ket{q_0}+\beta_1 \ket{q_1}+\beta_2 \ket{q_2}
	\end{equation}
	for $(\beta_0, \beta_1, \beta_2)^T$=${\dfrac{1}{5}}^n W_n...W_1 (1, 0, 0)^T$. As we repeat the subroutines 3, the input tape head is moved towards the left direction until the right-end marker is read, then shift the tape head one position to the right. Now, in loop 4, the inverses of $U_a$ and $U_g$ are performed, the quantum state is changed as
	\begin{equation}
	\gamma_0 \ket{q_0}+\gamma_1 \ket{q_1}+\gamma_2 \ket{q_2}
	\end{equation}
	for $(\gamma_0, \gamma_1, \gamma_2)^T$=$W_n^{-1}...W_1^{-1}W_n...W_1(1, 0, 0)^T$. The quantum state is measured and $M_1$ is said to be rejected with probability $P_{rej}=\gamma_1^{2}+\gamma_2^{2}$, else it collapses to initial state $q_0$. If \textit{w} is a palindrome, then $P_{rej}=0$, else $P_{rej} > 25^{-n}$. Initialize the variable \textit{z} equal to 0, which is stored in classical state. Finally, on executing the while-loop 5, the input string is said to be accepted if the loop is terminated with $z=0$. It is known from standard result in probability theory that probability of reaching the location \textit{n}+1 is $\dfrac{1}{n+1}$. On flipping the \textit{k} coins, the probability of acceptance $P_{acc}=1/2^k(n+1)$.  If the algorithm is repeated indefinitely, then the probability of rejectance is 
	\begin{equation}
	Pr[M_1 ~ rejects ~ w]= \sum_{i \geq 0} (1-P_{acc})^i (1-P_{rej})^iP_{rej}=\dfrac{P_{rej}}{P_{acc}+P_{rej}-P{acc}P_{rej}}
	\end{equation}
	and accepting probability is
	\begin{equation}
	Pr[M_1 ~ accepts ~ w]= \sum_{i \geq 0} (1-P_{acc})^i (1-P_{rej})^{i+1}P_{acc}=\dfrac{P_{acc}-P_{acc}P_{rej}}{P_{acc}+P_{rej}-P{acc}P_{rej}}
	\end{equation}
	If the input string $w \in L_h$, then the probability of $M_1$ accepting \textit{w} is 1. Suppose $k\geq\text{max}\{log~25, -log~\epsilon\}$, it can be checked that if $w \notin L_h$, the $M_1$ rejects the input string with probability atleast $1-\epsilon$.
\end{proof}

\subsection{Pseudoknot Structure}
A pseudoknot structure is a double-hairpin structure that forms an extended quasi-continuous helix structure and double connecting loops \cite{37}. It is formed when pairs are created between the bases outside and inside of a hairpin or internal loop. Pseudoknot structure plays a crucial role in RNA functions such as regulation of splicing and translation and ribosome frameshifting \cite{38}. It is considered as a key component of ribozymes or ribosomal RNAs.  Fig 2 describes the pseudoknot secondary structure. A closer look at pseudoknot structure shows a similarly with constructs of natural language (i.e. dependencies are forced to cross) such as $\{a^ng^mu^nc^m \mid n, m \geq 1\}$  \cite{61}. Thus, the number of \textit{a}'s is equal to the number of \textit{u}'s and correspondingly the number of \textit{g}'s is equal to the number of \textit{c}'s.
\begin{figure}[h]
	\centering
	\includegraphics[scale=0.65]{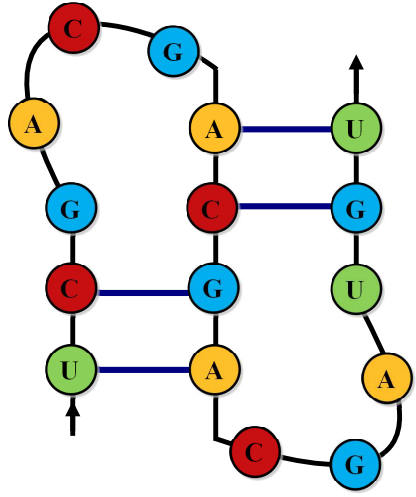}
	\caption{Representation of psedoknot bio-molecular structure}
\end{figure}
\begin{thm}
	A language $L_{p}=\{a^ng^mu^nc^m \mid n, m \geq 1\}$ representing pseudoknot  biomolecule structure can be recognized by 2QCFA  in polynomial time with error probability $\epsilon=\epsilon_1+\epsilon_2-\epsilon_1\epsilon_2$. 
\end{thm}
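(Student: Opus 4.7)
The plan is to decompose the membership test for $L_{p}$ into three subroutines that a 2QCFA can execute over a small number of tape sweeps, and then combine their outcomes. Write $L_{p}$ as the intersection of the regular ``shape'' language $R = a^{+}g^{+}u^{+}c^{+}$ with the two counting constraints $\#_{a}(x) = \#_{u}(x)$ and $\#_{g}(x) = \#_{c}(x)$. The first step is to verify membership in $R$ using only the classical portion of the 2QCFA: a four-state deterministic automaton suffices, run once from left to right. If this check fails, the machine rejects deterministically; otherwise it proceeds to two quantum subroutines, using the classical states to record that the format is correct and to drive the subsequent head movements.

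Next, for each counting constraint I would use a one-sided-error quantum counting subroutine in the style of Ambainis and Watrous~\cite{1}, combined with the polynomial-time amplification technique of Zheng et al.~\cite{20}. Take a two-dimensional quantum register and a fixed angle $\theta$ that is an irrational multiple of $\pi$. During a single left-to-right pass, apply the rotation $R(\theta)$ whenever the head reads an $a$ and $R(-\theta)$ whenever it reads a $u$, leaving the register untouched on $g$ or $c$; then measure in the computational basis. If $\#_{a}(x) = \#_{u}(x)$, the state returns exactly to $\ket{q_{0}}$ and the measurement always yields acceptance; otherwise the state is $R((n-n')\theta)\ket{q_{0}}$ with $n - n' \ne 0$, so the probability of seeing $\ket{q_{0}}$ equals $\cos^{2}((n-n')\theta) < 1$. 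The symmetric subroutine, with $g$ and $c$ playing the roles of $a$ and $u$, handles the second counting constraint.

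The main obstacle is turning these single-shot tests into polynomial-time bounded-error procedures: for an input of length $n$ the gap $1 - \cos^{2}((n-n')\theta)$ can be as small as $\Theta(1/n^{2})$, so naive independent repetition would require exponentially many rounds to drive the failure probability below a prescribed $\epsilon_{i}$. The remedy is the random-walk amplification already used to place $\{a^{n}b^{n}\}$ in polynomial time: interleave each quantum rotation round with a classical random walk implemented inside the classical states of the 2QCFA, whose expected length is polynomial in $n$ and whose stopping probability is tuned so that, over polynomially many quantum--classical rounds, any nonzero angular discrepancy is detected with probability at least $1 - \epsilon_{i}$. I would carry out this bookkeeping separately for each of the two counting subroutines, obtaining one-sided error $\epsilon_{1}$ for the $a/u$ test and $\epsilon_{2}$ for the $g/c$ test in polynomial expected time.

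Finally, the automaton accepts only when the format check succeeds and both quantum subroutines accept. Because the two subroutines operate on disjoint portions of the tape and use independent quantum registers, the error events can be treated as independent, and a union-bound computation gives
\begin{equation}
\Pr[\text{overall error}] \;\leq\; 1 - (1 - \epsilon_{1})(1 - \epsilon_{2}) \;=\; \epsilon_{1} + \epsilon_{2} - \epsilon_{1}\epsilon_{2},
\end{equation}
which matches the bound claimed in the theorem. Every input in $L_{p}$ is still accepted with probability one because each subroutine is one-sided, while the total running time is the sum of a linear format scan and two polynomial-time counting routines, hence polynomial.
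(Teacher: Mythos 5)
Your proposal is correct and follows essentially the same route as the paper: check the regular shape $a^{+}g^{+}u^{+}c^{+}$, reduce each counting constraint to an $\{a^{n}b^{n}\}$-style polynomial-time 2QCFA subroutine (the paper phrases this as recognizing $L_{1}=\{a^{n}g^{*}u^{n}c^{*}\}$ and $L_{2}=\{a^{*}g^{m}u^{*}c^{m}\}$), and combine them via closure under intersection to obtain error $\epsilon_{1}+\epsilon_{2}-\epsilon_{1}\epsilon_{2}$. You supply more detail on the irrational-rotation test where the paper simply cites Ambainis--Watrous, while the paper instead spells out the formal product construction of the combined machine; the substance is the same.
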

\begin{proof}
	The idea of the proof is as follows. The construction of 2QCFA $M_2$ consists of three phases. Firstly, it checks whether the input is in form $a^{+}u^{+}g^{+}c^{+}$, if not, then the input string is said to be rejected. In second phase, we check that it is in form $L_1=\{a^ng^*u^nc^* \mid n \geq 1\}$ or not. If yes, then we check that it is in form $L_2=\{a^*g^mu^*c^m \mid m \geq 1\}$. Since, 2QCFA can be designed for $L_{eq}=\{a^nb^n \mid n \geq 1\}$. Similarly, it can recognize $L_1$ and $L_2$ in polynomial time. 
	Qiu \cite{21} proved that the class of languages recognized by 2QCFA are closed under intersection, reversal, complement and union operations. For convenience, 2QCFA$_{\epsilon}$(\textit{poly-time}) notation is used to denote the class of languages recognized by 2QCFA in polynomial time with error probability $\epsilon \geq 0$.  Therefore, if $L_1 \in$ 2QCFA$_{\epsilon_1}$(\textit{poly-time}) and  $L_2 \in$ 2QCFA$_{\epsilon_2}$(\textit{poly-time}), then $L_p=L_1 \cap L_2$ can be recognized by 2QCFA in polynomial time with $\epsilon=\epsilon_1+\epsilon_2-\epsilon_1\epsilon_2$. 
	
	Based on the above analysis, the prove of this theorem is described now more formally. Let 
	\begin{equation}
	M_j=(S_j, Q_j, \Sigma_j, \Theta_j,  \delta_j, q_{j, 0}, s_{j, 0}, S_{j, acc}, S_{j, rej})
	\end{equation}
	be 2QCFA's for recognizing $L_j$ in polynomial time with error probabilities $\epsilon_j \geq 0$ for \textit{j}=1, 2, where
	\begin{itemize}
		\item $S_j=\{s_{j, 0}, s_{j, 1}, ..., s_{j, m_j}\}$, 
		\item $Q_j=\{q_{j, 0}, q_{j, 1}, ..., q_{j, n_j}\}$
	\end{itemize}
	We construct a 2QCFA for $L_p$ such that $M_p=(S, Q, \Sigma, \Theta,  \delta, q_0, s_{0}, S_{acc}, S_{rej})$ where
	\begin{itemize}
		\item $S=S_1 \cup S_2 \cup \{z_{1, i} \mid i= 0, 1, ..., n_1\}$, 
		\item $Q =Q_1 \cup Q_2$,
		\item $\Sigma=\Sigma_1 \cup \Sigma_2$,
		\item $q=q_{1, 0}$,
		\item $s=s_{1, 0}$,
		\item $S_{acc}=S_{1, acc} \cup S_{2, acc}$,
		\item $S_{rej}=S_{1, rej} \cup S_{2, rej}$
		\item $\Theta_j$ and $\delta_j$ are defined as
	\end{itemize}
	\begin{enumerate}
		\item $\Theta$ is defined as a transition function: 
		\begin{equation}
		S \setminus (S_{1, acc} \cup S_{1, rej}) \times \Gamma \rightarrow U(\mathcal{H}(Q)) \cup P(\mathcal{H}(Q))
		\end{equation}
		\begin{enumerate}[(a)]
			\item if $\Theta_1(s, \sigma) \in U(\mathcal{H}(Q_1))$ corresponds to unitary operator over $\mathcal{H}(Q_1)$, then extend the $\Theta_1(Q)$ relating to $\Theta(s, \sigma) \ket{q_{2, i}}=\ket{q_{2, i}}$, for $0 \leq i \leq n_2$ and $\delta(s, \sigma)=\delta_1 (s, \sigma)$,
			\item if $\Theta_1(s, \sigma) \in P(\mathcal{H}(Q_1))$, which denotes projective measurement over $\mathcal{H}(Q_1)$ and measurement is represented by projector set $\{P_i\}$. Then, $\Theta(s, \sigma)$ denotes an orthogonal measurement on $\mathcal{H}(Q)$ represented by $\{P_i^{'}\} \cup \{I_1\}$ projection operators on $P(Q)=P(Q_1 \cup Q_2)$, where $P_i^{'}$ represent projection operators obtain by extending $P_i$ with $P_i^{'} \ket{q_{2, i}}$, for $0 \leq i \leq n_2$  and $I_1$ represents mapping of  projection operator to $\mathcal{H}(Q_2)$, i.e. an identity operator. 
		\end{enumerate}
		\item For any $s \in S_2$ and $\sigma \in \Sigma \cup \{\#, \$\}$
		\begin{enumerate}[(a)]
			\item if $\Theta_2(s, \sigma) \in U(\mathcal{H}(Q_2))$ corresponds to unitary operator over $\mathcal{H}(Q_2)$, then extend the $\Theta_2(Q)$ relating to $\Theta(s, \sigma) \ket{q_{1, i}}=\ket{q_{1, i}}$, for $0 \leq i \leq n_1$ and $\delta(s, \sigma)=\delta_2(s, \sigma)$,
			\item if $\Theta_2(s, \sigma) \in P(\mathcal{H}(Q_2))$, which denotes projective measurement over $\mathcal{H}(Q_2)$ and measurement is represented by projector set $\{P_k\}$, then $\Theta(s, \sigma)$ is an orthogonal measurement on $\mathcal{H}(Q)$ represented by $\{P_k^{'}\} \cup \{I_2\}$ projection operators on $P(Q)=P(Q_1 \cup Q_2)$, where $P_k^{'}$ extend $P_k$ to $\mathcal{H}(Q)$ by defining $P_k^{'} \ket{q_{1, k}}=0$, for $0 \leq k \leq n_1$. 
		\end{enumerate}
		\item For any $s \in S_{1, acc}$ and $\sigma \in \Sigma \cup \{\#, \$\}$
		\begin{enumerate}[(a)]
			\item  if $\sigma \neq \#$, then $\Theta(s, \sigma)=I$ and $\delta(s, \sigma)= (s, -1)$, where \textit{I} is an identity operator over $\mathcal{H}(Q)$,
			\item  else if $\sigma=\#$, then $\Theta(s, \sigma)$ represents an orthogonal measurement by projectors $\{\ket{q_{1, i}}\bra{q_{1, i}}\mid q_{1, i} \in Q_1\}$, $\delta(s, \sigma) (1, i)=(z_{1, i}, 0)$; $\delta(z_{1, i}, \#)=(s_{2, 0}, 0)$; $\Theta(z_{q, i}, \#)=U(q_{1, i}, q_{2, 0})$, where \textit{U} denotes a unitary operator over $\mathcal{H}(Q)$ satisfy $U\ket{q_{1, i}}=\ket{q_{2, 0}}$.
		\end{enumerate}
	\end{enumerate}
	Recall the languages $L_1=\{a^ng^*u^nc^* \mid n \geq 1\}$ and $L_2=\{a^*g^mu^*c^m \mid m \geq 1\}$. In respect of 2QCFA $M_p$ designed above, for any input string \textit{w} $\in \{a, u, g, c\}^*$, we have considered the following cases:
	\begin{itemize}
		\item if $w \in L_1 \cap L_2$, then at the end of the computation process $M_p$ enters a state $S_{acc}$ with probability atleast $(1-\epsilon_1)(1-\epsilon_2)$. Then, $L_p$ is said to be recognized by $M_p$ with probability atleast  $(1-\epsilon_1)(1-\epsilon_2)= (1-(\epsilon_1+\epsilon_2-\epsilon_1\epsilon_2))$.
		\item if $w \in L_1$, but $w \notin L_2$, then at the end of the computation $M_P$ enters a state $S_{1, acc} \times S_{2, rej}$ with probability atleast $(1-\epsilon_1)(1-\epsilon_2)$. Then, $L_p$ is said to be rejected by $M_p$ with probability atleast  $(1-\epsilon_1)(1-\epsilon_2)= (1-(\epsilon_1+\epsilon_2-\epsilon_1\epsilon_2))$.
		\item if $w \notin L_1$, then the state of $M_p$ is changed to $S_{1, rej}$ and it is said to be rejected with probability atleast $1-\epsilon_1$. 
	\end{itemize}
	Hence, if the languages $L_1$ and $L_2$ are said to be recognized by 2QCFA's $M_1$ and $M_2$ in polynomial time with error probabilities $\epsilon_1, \epsilon_2 \geq 0$, respectively. Then, $L_p=L_1 \cap L_2$ is said to be recognized by $M_p$ in polynomial time with error probability $\epsilon=\epsilon_1+\epsilon_2-\epsilon_1\epsilon_2$.
\end{proof}
\subsection{Dumbbell Structure}
Dumbbell shaped RNA structure is formed by analogy of DNA dumbbells comprised of two-helical stems closed by two hairpin loop structures. It plays an important role in analysis of local structures in DNA. The loops on its both sides  restrict its enzymatic cleavage and stabilize the duplex. It is successfully applied to transcriptional regulation \cite{39}. Fig 3 shows the representation of dumbbell shaped  RNA secondary structure.  
\begin{figure}[h]
	\centering
	\includegraphics[scale=0.65]{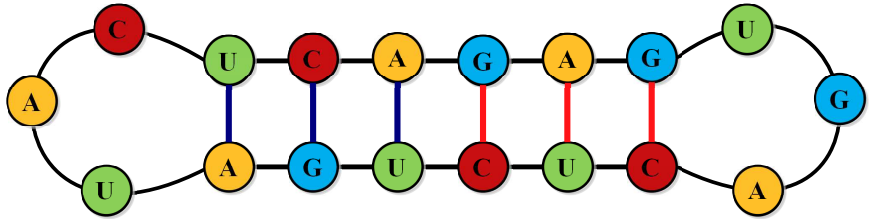}
	\caption{Representation of dumbbell bio-molecular structure}
\end{figure}
\begin{thm}
	A language $L_{d}=\{a^nu^ng^mc^m \mid n, m \geq 1\}$ representing dumbbell biomolecule structure can be recognized by 2QCFA with one-sided error probability in polynomial time. 
\end{thm}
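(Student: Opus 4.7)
The plan is to mimic the intersection-based construction already used for Theorem 2 (pseudoknot), exploiting the fact that $L_d$ decomposes into two nested equality-of-counts problems. First I would filter the input through the regular language $a^+ u^+ g^+ c^+$ using the classical component of the 2QCFA; this costs zero error and only finite extra classical state. After passing this format check, the remaining task is to verify separately that the number of $a$'s equals the number of $u$'s and that the number of $g$'s equals the number of $c$'s.

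Next I would introduce the auxiliary languages $L_3 = \{a^n u^n g^* c^* \mid n \geq 1\}$ and $L_4 = \{a^* u^* g^m c^m \mid m \geq 1\}$, so that $L_d = (a^+ u^+ g^+ c^+) \cap L_3 \cap L_4$. Both $L_3$ and $L_4$ are essentially restatements of $L_{eq} = \{a^n b^n \mid n \geq 1\}$ with extra symbols the head simply skips over. The Ambainis--Watrous construction recognises $L_{eq}$ with one-sided error in polynomial time, and a straightforward rewiring of the transition function (treating the irrelevant alphabet symbols as no-ops during the counting phase) yields 2QCFAs $M_3$ and $M_4$ for $L_3$ and $L_4$ with one-sided error $\epsilon_3$ and $\epsilon_4$, respectively, still in polynomial time.

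Finally I would combine $M_3$ and $M_4$ using exactly the intersection construction laid out for $M_p$ in the proof of Theorem 2: run $M_3$ to completion; if it enters $S_{3,acc}$ reset the head to the left-end marker, transfer control to $M_4$, and return $M_4$'s verdict. Because both component machines are one-sided (they accept every string of their language with probability $1$), the composite machine accepts every $w \in L_d$ with probability $1$, preserving one-sidedness. The rejection error satisfies $\epsilon = \epsilon_3 + \epsilon_4 - \epsilon_3 \epsilon_4$ by the same calculation as in Theorem 2, and polynomial running time is preserved because the two phases are executed sequentially.

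The main obstacle, as in Theorem 2, is bookkeeping rather than anything deep: one must carefully extend the Hilbert space $\mathcal{H}(Q) = \mathcal{H}(Q_3) \oplus \mathcal{H}(Q_4)$ so that unitaries and projective measurements from $M_3$ act as the identity on $\mathcal{H}(Q_4)$ and vice versa, and verify that the handoff at the boundary (analogous to the $\{z_{1,i}\}$ states in the pseudoknot proof) resets the quantum register cleanly so that $M_4$ starts from its designated initial state $\ket{q_{4,0}}$ regardless of the measurement outcome used to accept in $M_3$. Once that plumbing is in place, the error bound and time bound follow immediately from Qiu's closure-under-intersection result and the polynomial-time bound for $L_{eq}$.
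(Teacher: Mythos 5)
Your proposal is correct and follows essentially the same route as the paper: a classical format check for $a^+u^+g^+c^+$ followed by two sequential runs of the Ambainis--Watrous $L_{eq}$ protocol, one for the $a$/$u$ counts and one for the $g$/$c$ counts, combined via the intersection construction of Theorem 2. The only cosmetic difference is that the paper has each phase treat the first $g$ (resp.\ the last $u$) as a virtual right- (resp.\ left-) end marker so that it works on a substring, whereas you let the component machines traverse the whole input and skip the irrelevant symbols as no-ops; both implementations are sound and yield the same one-sided error and polynomial-time bound.
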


\begin{proof}
	Ambainis and Watrous proved that a language $L_{eq}=\{a^nb^n \mid n \in \mathbb{N}\}$ can be recognized by 2QCFA in polynomial time, which can be recognized by 2PFA in exponential time \cite {1}. Similarly, it can be proved that a language $L_{d}=\{a^nu^ng^mc^m \mid n, m \geq 1\}$ can be recognized by 2QCFA $M_d$ in polynomial time with one-sided error probability. It consists of three phases. Firstly, it examines whether the input string is in form $a^{+}u^{+}g^{+}c^{+}$, if not, then the input string is said to be rejected. Otherwise, in second phase, 2QCFA simulates the initial part of string to determine whether $a^nu^n$ is in $L_d$, by using \textit{g} in the right side of \textit{u} as the right-end marker \$. If not, the computation is said to be rejected. Otherwise, in third phase, the 2QCFA finally checks $g^mc^m$ is in $L_d$ and \textit{u} is used as a left-end marker \#. If the number of \textit{g}'s and \textit{c}'s are equal, then it is said to be recognized with one-sided error probability in polynomial time, otherwise rejected.
\end{proof}
\section{Conclusion}
The enhancement in many existing computational approaches provides momentum to biological systems and quantum simulations at the gene expression levels. It helps to test new abstract approaches for considering RNA, DNA and protein sequences. Previous attempts to model the aforementioned RNA secondary structures used formal grammar and finite automata theory. In this paper, we focused on well-known structures of RNA such as hairpin loop, pseudoknot and dumbbell biomolecular structures and modeled them using two-way quantum finite automata with classical states. The crucial advantage of the quantum approach is that these secondary structures transcribed in formal languages takes exponential time for $L_h$ and polynomial time for $L_p$ and $L_d$, respectively. It has been shown that two-way quantum finite automata with classical states are more superior than its classical variants by using quantum part of finite size. For the future purpose, we will try to represent complex RNA structures in formal languages and model them using other quantum computational models.
\section*{Additional Information}

\textbf{Conflict of interest} The authors declare that they have no conflict of interest.
\section*{Acknowledgement}
S.Z. acknowledges support in part from the National Natural Science
Foundation of China (Nos. 61602532),  the Natural Science Foundation of Guangdong Province of China (No. 2017A030313378), and the Science and Technology Program of Guangzhou City of China (No. 201707010194).

\end{document}